\newtheorem{theorem}{Theorem}
\DeclareMathOperator*{\argmin}{arg\,min}
\title{Data Discovery Using Lossless Compression-Based Sparse Representation}
\name{Elyas Sabeti$^{\star}$, Peter X.K. Song$^{\star \dagger}$ and Alfred O. Hero III$^{\dagger}$\thanks{This work was supported in part by Michigan Institute for Data Science and the Prometheus  Program  of the  Defense  Advanced  Research  Projects  Agency  (DARPA),  grant  number N66001-17-2-401.}}
\address{$^{\star}$Michigan Institute for Data Science, University of Michigan, Ann Arbor\\
      $^{\star \dagger}$Department of Biostatistics, School of Public Health, University of Michigan, Ann Arbor\\
			$^{\dagger}$Department of Electrical Engineering and Computer Science, University of Michigan, Ann Arbor}
\begin{document}
%
\maketitle
\begin{abstract}
Sparse representation has been widely used in data compression, signal and image denoising, dimensionality reduction and computer vision. While overcomplete dictionaries are required for sparse representation of multidimensional data, orthogonal bases represent one-dimensional data well. In this paper, we propose a data-driven sparse representation using orthonormal bases under the lossless compression constraint. We show that imposing such constraint under the Minimum Description Length (MDL) principle leads to a unique and optimal sparse representation for one-dimensional data, which results in discriminative features useful for data discovery. 

\end{abstract}
\begin{keywords}
Sparse Representation, lossless, MDL
\end{keywords}
\vspace{-0.15in}
\section{Introduction}
\label{sec:intro}
\vspace{-0.1in}
Sparse representation has been historically used for data compression, signal and image denoising, and dimensionality reduction. Initially (before 1993), the foundation of sparse representation was based on certain transforms (e.g. Fourier, wavelets) and orthogonal bases that uses a fixed dictionary \cite{rubinstein2010dictionaries,zhang2015survey}. While these methods served one-dimensional data well, the dictated \textit{orthogonality} requirement was overly restrictive for multidimensional data \cite{rubinstein2010dictionaries}. Following the works of Mallat et al. \cite{mallat1993matching} in 1993 and Chen et al. \cite{chen1994basis} subsequently, a new era of sparse representation begin: instead of a fixed orthogonal dictionary, adaptive and \textit{overcomplete} dictionaries were used for sparse representation \cite{rubinstein2010dictionaries}. The overcomplete dictionaries permit multiple sparse descriptions in the representation domain, in which the best description is task-dependent. In recent years, data-driven dictionaries (dictionary learning) \cite{tosic2011dictionary} and convolutional sparse coding methods \cite{bristow2013fast,wohlberg2014efficient} have been used for variety of machine learning applications such as computer vision \cite{wright2008robust}.

In this paper, we develop a sparse representation using orthonormal (e.g. wavelets) bases under the lossless compression constraint, we show that imposing such constraint leads to a unique and optimal sparse representation for a one-dimensional data which can be used for classification and anomaly detection in time series. This paper is organized as follows: in Section \ref{sec:SparRep}, first the model is described, lossless compression overview is provided and then the solution to the problem is provided. Section \ref{sec:Experiment} presents an experiment that shows the power of the proposed optimal sparse representation in data pattern discovery.

\vspace{-0.17in}
\section{Sparse Representation Using Orthonormal Bases}
\label{sec:SparRep}
\vspace{-0.1in}
Let $\mathbf{W}\in\mathbb{R}^{l\times l}$ be an orthonormal Discrete Wavelet Transform (DWT) matrix whose column vectors are the basis elements of an orthonormal basis $\mathcal{B}$. For any data block $\mathbf{x}\in\mathbb{R}^{l}$ of length $l=2^N$ we have $\boldsymbol{\alpha} =\mathbf{W} \mathbf{x}$ where $\boldsymbol{\alpha}\in\mathbb{R}^{l}$ is a column vector containing the wavelet coefficients of $\mathbf{x}$. This leads to the following sparse representation
\vspace{-0.04in}
\begin{align}
\mathbf{x} & =\mathbf{W}^T\boldsymbol{\alpha}^{(k)}+\mathbf{n}, \qquad \|\boldsymbol{\alpha}^{(k)}\|_0 = k.
\label{eq:MainModel}
\end{align}
\vspace{-0.03in}
where $\boldsymbol{\alpha}^{(k)}$ is the vector of the wavelet coefficients with only $k$ non-zero coefficients and $\mathbf{n}\sim\mathcal{N}(0,\sigma^{2}\mathbf{I})$ is Gaussian white noise with unknown $\sigma^{2}$. The objective is to choose the value of $k$ to minimize an error criterion. Without any further constraints, there is no unique value for $k$, all of which result in a sparse model with loss of information (with the obvious exception of $k=l$). A similar problem has been considered in \cite{saito1994simultaneous} where the ``best'' $k$ and the best basis over a library of orthonormal bases was found in order to suppress the noise. Here, we show that using the lossless compression as the constraint and adhering to strict decodability \cite{sabeti2019data,host2019data} can lead to an ``optimal'' $k$ (and corresponding wavelet coefficients) whose variation in time series can be used as a discriminating feature in machine learning and anomaly detection setting. The process of finding such $k$ leads to Rissanen's famous Minimum Description Length (MDL) approach \cite{Rissanen83} that provides a framework tailored for optimization and model selection in a lossless compression setting. The other prominent penalty-based model selection methods that can be used are the Akaike Information Criterion (AIC) \cite{akaike1974AIC} and the Bayesian Information Criterion (BIC) \cite{schwarz1978BIC}, but it has been shown that MDL significantly outperforms AIC and BIC for generalized linear models \cite{hansen2003minimum,chen2005comparative}. MDL has also been previously used for sparse lossless audio compression \cite{ghido2012sparse} and dictionary learning \cite{ramirez2012mdl} as well. In the next section, we summarize the MDL results that are used in this paper. 

\subsection{Descriptive Length for Integers, Real-valued Numbers and Parametrized Models}
\label{MDLreal.sec}
The MDL principle attempts to minimize the number of bits required to parsimoniously describe (encode) observation. In this framework, the encoding process is not of interest and only the corresponding codelength is considered. MDL is often used in model selection frameworks in which the optimum model is the one that minimizes the summation of model complexity codelength and the negative log-likelihood. This is similar to how Structural Risk Minimization is used to balance the bias-variance trade-off in machine learning \cite{shalev2014understanding}. Suppose we want to describe the data with a parametrized model $f(\mathbf{x}|\boldsymbol{\theta})$. Let 
\begin{align*}
\mathcal{M}=\left\{ \boldsymbol{\theta}_{m}\in\Theta_{m}\subset\mathbb{R}^{k_{m}}:m=1,2,\ldots\right\},
\end{align*}
be a class of models at hand where the integer $m$ is an index of the model in the list. In order to encode a sequence $\mathbf{x}$ of length $l$ of the observed data using the model $\boldsymbol{\theta}_{m}$ without losing information (lossless), we have to encode the index $m$, the model $\boldsymbol{\theta}_{m}$ given $m$ and the data itself given the model, therefore the total codelength of the whole process is
\begin{align}
L(\mathbf{x},\boldsymbol{\theta}_{m},m) & =L(m)+L(\boldsymbol{\theta}_{m}|m)+L(\mathbf{x}|\boldsymbol{\theta}_{m},m).\label{eq:General MDL}
\end{align}
The idea of the MDL principle is to minimize the total codelength $L(\mathbf{x},\boldsymbol{\theta}_{m},m)$ while balancing the trade-off between the model complexity $L(\boldsymbol{\theta}_{m}|m)$ and data complexity $L(\mathbf{x}|\boldsymbol{\theta}_{m},m)$ in order to avoid over-fitting. As such, $L(\mathbf{x}|\boldsymbol{\theta}_{m},m)$ is minimized by using the maximum likelihood estimator $\boldsymbol{\hat{\theta}}_{m}$, thus $L(\boldsymbol{\hat{\theta}}_{m}|m)$ can be expressed using its $k_{m}$ real-valued parameters estimates $\boldsymbol{\hat{\theta}}_{m}$ either
by $\frac{5}{2}k_{m}+\frac{k_{m}}{2}\log l$ \cite{Host15ISITl}, or by a $\hat{\boldsymbol{\theta}}_{m}$-dependent expression $\overset{k_{m}}{\underset{j=1}{\sum}}\log^{*}\left(\left\lfloor \hat{\theta}_{m,j}\right\rfloor \right)+\frac{k_{m}}{2}\log l$ where $\log^{*}(l)=\log l+\log\log l+\log\log\log l+\cdots$, where
the sum continues as long as the argument to the log is positive,
both of which originated from Rissanen's prior for the integers \cite{Rissanen83}.
In summary we have
{\small{}
\begin{align}
L(m) & =\begin{cases}
\log M, & \qquad|\mathcal{M}|\leq M;\\
\log^{*}m, & \qquad\quad o.w.
\end{cases}\nonumber \\
L(\boldsymbol{\hat{\theta}}_{m}|m) & =\begin{cases}
\frac{5}{2}k_{m}+\frac{k_{m}}{2}\log l, & \!\!\!\!\!\!\!\!\!\!\!\!\!\!\!\!\!\! independent of\,\hat{\theta}_{m,j};\\
\overset{k_{m}}{\underset{j=1}{\sum}}\log^{*}\left(\left\lfloor \hat{\theta}_{m,j}\right\rfloor \right)+\frac{k_{m}}{2}\log l, & \qquad o.w.
\end{cases}\label{eq:expr for Gpar}\\
L(\mathbf{x}|\boldsymbol{\hat{\theta}}_{m},m) & =-\log f(\mathbf{x}|\hat{\boldsymbol{\theta}}_{m},m).\nonumber 
\end{align}
}
The minimum of (\ref{eq:General MDL}) using the expressions in
(\ref{eq:expr for Gpar}) gives the best trade-off between the
model complexity and the likelihood of the data without any loss of information. 

\subsection{The Optimal Sparse Representation Using MDL}
\label{sec:SparRepMDL}
Going back to the problem considered earlier in the section, we aim to find the optimum sparse representation in equation (\ref{eq:MainModel}) subject to achieving the minimal lossless codelength of the observed data under the MDL principle. The following theorem presents such optimal $k$ and gives the corresponding minimum descriptive codelength for this problem:
\begin{theorem}
\label{thm:OptimalK}
Suppose $\mathbf{x}\in\mathbb{R}^{l}$ is observed data of length $l=2^N$ contaminated with an additive Gaussian noise $\mathbf{n}$ with unknown variance $\sigma^{2}$, let $\mathbf{W}\in\mathbb{R}^{l\times l}$ be an orthonormal matrix. Given the sparse model $\mathbf{x} =\mathbf{W}^T\boldsymbol{\alpha}^{(k)}+\mathbf{n}$ such that $\|\boldsymbol{\alpha}^{(k)}\|_0 = k$, subject to the minimization of the lossless descriptive length of $\mathbf{x}$, the optimal value of $k$ and the codelength of $\mathbf{x}$ are given by
{\small{}
\begin{align*}
k & = \argmin_{1\leq k<l/2}\left\{ \frac{5}{2}k+\frac{k}{2}\log l+lH(\frac{k}{l})+\frac{l}{2}\log\left\Vert \nabla^{(l-k)}\mathbf{W}\mathbf{x}\right\Vert ^{2}\right\},\\
L & =\min_{1\leq k<l/2}\left\{ \frac{5}{2}k+\frac{k}{2}\log l+lH(\frac{k}{l})+\frac{l}{2}\log\left\Vert \nabla^{(l-k)}\mathbf{W}\mathbf{x}\right\Vert ^{2}\right\} \\
 & +\frac{5}{2}+2\log l+\frac{l}{2}\log\frac{2\pi}{l}+\frac{l}{2\ln2}+\log\frac{\pi}{8},
\end{align*}}
where $H(.)$ is the entropy function and $\nabla^{(k)}\mathbf{y}$ is an operator that keeps the $k$ components of $\mathbf{y}$ with the smallest absolute value.
\end{theorem}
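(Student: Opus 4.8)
The plan is to transport the problem into the wavelet domain, where orthonormality makes the least-squares fit transparent, and then to assemble the two-part MDL codelength of (\ref{eq:General MDL}) term by term. Since $\mathbf{W}$ is orthonormal, setting $\boldsymbol{\alpha}=\mathbf{W}\mathbf{x}$ and $\tilde{\mathbf{n}}=\mathbf{W}\mathbf{n}\sim\mathcal{N}(0,\sigma^{2}\mathbf{I})$ gives $\|\mathbf{x}-\mathbf{W}^{T}\boldsymbol{\alpha}^{(k)}\|^{2}=\|\boldsymbol{\alpha}-\boldsymbol{\alpha}^{(k)}\|^{2}$, so (\ref{eq:MainModel}) is equivalent to $\boldsymbol{\alpha}=\boldsymbol{\alpha}^{(k)}+\tilde{\mathbf{n}}$ with $\|\boldsymbol{\alpha}^{(k)}\|_{0}=k$. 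To describe $\mathbf{x}$ losslessly under this model one must encode the integer $k$, the support $S\subset\{1,\dots,l\}$ with $|S|=k$, the $k$ real coefficient values on $S$, the noise variance $\sigma^{2}$, and finally the data given all of these, so that
\begin{align*}
L(\mathbf{x})=L(k)+L(S\mid k)+L(\boldsymbol{\alpha}^{(k)}\mid S)+L(\sigma^{2})+L(\mathbf{x}\mid\boldsymbol{\alpha}^{(k)},\sigma^{2}),
\end{align*}
and MDL prescribes minimizing this jointly over $(\sigma^{2},\boldsymbol{\alpha}^{(k)},S,k)$.

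Next I would perform the inner minimization in closed form. Because the value-independent forms of (\ref{eq:expr for Gpar}) are used, $L(\boldsymbol{\alpha}^{(k)}\mid S)=\frac{5}{2}k+\frac{k}{2}\log l$ and $L(\sigma^{2})=\frac{5}{2}+\frac{1}{2}\log l$ do not depend on the parameter values, so minimizing over those values reduces to maximum likelihood. For fixed $k,S$ the data term $-\log f(\mathbf{x}\mid\boldsymbol{\alpha}^{(k)},\sigma^{2})=\frac{l}{2}\log(2\pi\sigma^{2})+\frac{1}{2\sigma^{2}\ln 2}\|\boldsymbol{\alpha}-\boldsymbol{\alpha}^{(k)}\|^{2}$ is minimized over the coefficients by $\hat{\alpha}^{(k)}_{i}=\alpha_{i}$ for $i\in S$, leaving residual energy $\sum_{i\notin S}\alpha_{i}^{2}$; minimizing this over all $S$ of size $k$ discards the $l-k$ smallest-magnitude coefficients, i.e. the residual equals $\|\nabla^{(l-k)}\mathbf{W}\mathbf{x}\|^{2}$; and the MLE $\hat{\sigma}^{2}=\frac{1}{l}\|\nabla^{(l-k)}\mathbf{W}\mathbf{x}\|^{2}$ then yields
\begin{align*}
L(\mathbf{x}\mid\hat{\boldsymbol{\alpha}}^{(k)},\hat{\sigma}^{2})=\tfrac{l}{2}\log\|\nabla^{(l-k)}\mathbf{W}\mathbf{x}\|^{2}+\tfrac{l}{2}\log\tfrac{2\pi}{l}+\tfrac{l}{2\ln 2}.
\end{align*}
For the support I would use $L(S\mid k)=\log\binom{l}{k}$ and Stirling's formula to write this as $lH(k/l)$ plus lower-order terms, while $L(k)$ costs (uniformly over the candidate range, or via the integer prior) another $O(\log l)$ bits.

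Finally I would collect: the $k$-dependent quantities $\frac{5}{2}k+\frac{k}{2}\log l$, $lH(k/l)$, and $\frac{l}{2}\log\|\nabla^{(l-k)}\mathbf{W}\mathbf{x}\|^{2}$ form the expression inside the $\argmin$, and $L(k)$, $L(\sigma^{2})$, the Gaussian constants $\frac{l}{2}\log\frac{2\pi}{l}+\frac{l}{2\ln 2}$, and the residual terms of Stirling's expansion combine into the stated additive constant $\frac{5}{2}+2\log l+\frac{l}{2}\log\frac{2\pi}{l}+\frac{l}{2\ln 2}+\log\frac{\pi}{8}$, so that the minimizing $k$ is the claimed optimal sparsity. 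I expect two points to demand the most care. The first is the exact bookkeeping of the additive constant: recovering precisely $2\log l$ and $\log\frac{\pi}{8}$ requires the normalization of Rissanen's integer code together with the $O(\log l)$ and $O(1)$ terms of Stirling's expansion of $\binom{l}{k}$, and one must verify these are $k$-independent (or asymptotically negligible) so that they do not perturb the $\argmin$. The second is the restriction $1\le k<l/2$: it cannot be dropped, since as $k\to l$ the residual $\|\nabla^{(l-k)}\mathbf{W}\mathbf{x}\|^{2}\to 0$ forces $\hat{\sigma}^{2}\to 0$ and drives $L(\mathbf{x})\to-\infty$ — the degenerate lossless regime noted after (\ref{eq:MainModel}) — so one must argue that confining attention to genuinely sparse ($k<l/2$) models is both the intended scope and what makes the minimizer well defined.
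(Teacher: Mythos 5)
Your proposal is correct and follows essentially the same route as the paper: encode $k$ in $\log l$ bits, the support via $\log\binom{l}{k}\approx lH(k/l)$, the $k+1$ real parameters ($\boldsymbol{\alpha}^{(k)}$ and $\sigma^2$) at $\frac{5}{2}+\frac{1}{2}\log l$ bits each, and the Gaussian residual evaluated at the MLEs $\widehat{\boldsymbol{\alpha}}^{(k)}=\triangle^{(k)}(\mathbf{W}\mathbf{x})$, $\widehat{\sigma}^{2}=\frac{1}{l}\|\nabla^{(l-k)}\mathbf{W}\mathbf{x}\|^{2}$. The only cosmetic difference is that where you invoke Stirling's formula, the paper uses the $k$-independent upper bound $\log\binom{l}{k}\leq lH(\frac{k}{l})+\frac{1}{2}\log l+\log\frac{\pi}{8}$ from Cover and Thomas, which is precisely what settles the constant-bookkeeping concern you flag.
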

\begin{proof}
According to equation (\ref{eq:General MDL}), we first need to take into account the codelength required to describe (encode) the unknown value $k$; since $k\leq l$, by expression in (\ref{eq:expr for Gpar}), it requires $\log l$ bits. The next step is to find the maximum likelihood estimator of the $k+1$ parameters, i.e. $\boldsymbol{\alpha}^{(k)}$ and $\sigma^{2}$. It's easy to verify that the maximum likelihood estimation of the noise variance is $\widehat{\sigma}^{2}=\frac{1}{l}\left\Vert \mathbf{x}-\mathbf{W}^T\boldsymbol{\alpha}^{(k)}\right\Vert ^{2}$
and the maximum likelihood estimation of $k$ non-zero coefficients of $\boldsymbol{\alpha}^{(k)}$
is the largest $k$ coefficients of $\mathbf{W}\mathbf{x}$. Now by defining the operators $\triangle^{(k)}$ and $\nabla^{(k)}$ that keep the largest magnitude $k$ components and smallest magnitude $k$ components respectively and set rest to zero, we obtain
\begin{align*}
\boldsymbol{\widehat{\alpha}}^{(k)} & =\triangle^{(k)}(\mathbf{W}\mathbf{x}),\\
\widehat{\sigma}^{2} & =\frac{1}{l}\left\Vert \mathbf{x}-\mathbf{W}^T\boldsymbol{\alpha}^{(k)}\right\Vert ^{2}
=\frac{1}{l}\left\Vert \nabla^{(l-k)}\mathbf{W}\mathbf{x}\right\Vert ^{2}.
\end{align*}
In addition to the values (maximum likelihood estimates) of the $k+1$ parameters that we encode
using $\frac{5}{2}(k+1)+\frac{k+1}{2}\log l$, we have to convey the
index of the sequence with $k$ non-zero elements of $\boldsymbol{\alpha}^{(k)}$
among all the sequences with $k$ non-zero elements. This operation requires
$\log\binom{l}{k}$. Now using the upper bound from \cite[(13.46)]{CoverBook}
we have $\log\binom{l}{k}\leq lH(\frac{k}{l})+\frac{1}{2}\log l+\log\frac{\pi}{8}$ where $H(.)$ is the entropy function. Therefore,
{\small{}
\begin{align*}
L(k) & =\log l,\\
L(\boldsymbol{\widehat{\alpha}}^{(k)},\widehat{\sigma}^{2}|k) & =\frac{5}{2}\left(k+1\right)+\frac{k+1}{2}\log l+\log\binom{l}{k}\\
 & \approx\frac{5}{2}\left(k+1\right)+\left(\frac{k}{2}+1\right)\log l\\
 & +lH(\frac{k}{l})+\log\frac{\pi}{8},\\
L(\mathbf{x}|\boldsymbol{\widehat{\alpha}}^{(k)},\widehat{\sigma}^{2},k) & =\frac{l}{2}\log\left(\frac{2\pi}{l}\left\Vert \nabla^{(l-k)}\mathbf{W}\mathbf{x}\right\Vert ^{2}\right)+\frac{l}{2\ln2}.
\end{align*}
}
Ergo the optimal value of $k$ and the codelength of $\mathbf{x}$ are
{\small{}
\begin{align*}
k & = \argmin_{1\leq k<l/2}\left\{ \frac{5}{2}k+\frac{k}{2}\log l+lH(\frac{k}{l})+\frac{l}{2}\log\left\Vert \nabla^{(l-k)}\mathbf{W}\mathbf{x}\right\Vert ^{2}\right\},\\
L & =\min_{1\leq k<l/2}\left\{ \frac{5}{2}k+\frac{k}{2}\log l+lH(\frac{k}{l})+\frac{l}{2}\log\left\Vert \nabla^{(l-k)}\mathbf{W}\mathbf{x}\right\Vert ^{2}\right\} \\
 & +\frac{5}{2}+2\log l+\frac{l}{2}\log\frac{2\pi}{l}+\frac{l}{2\ln2}+\log\frac{\pi}{8}.
\end{align*}}
\end{proof}

\vspace{-.2in}
It's worth pointing out the two main conclusions from the theorem: (i) the MDL principle balances the trade-off between the number of retained coefficients and the error, the larger the $k$ the more bits required to encode $\frac{5}{2}k+\frac{k}{2}\log l+lH(\frac{k}{l})$ but the fewer bits required to encode $\log\left\Vert \nabla^{(l-k)}\mathbf{W}\mathbf{x}\right\Vert ^{2}$; (ii) the range from which $k$ is chosen is $[1,l/2]$ since the wavelet coefficients in the range $[l/2+1,l]$ are associated with the details at the lowest scale. This lowest scale usually contains the majority of the noise power, so the fact that as $k\rightarrow l$, we have $\log\left\Vert \nabla^{(l-k)}\mathbf{W}\mathbf{x}\right\Vert ^{2} \rightarrow -\infty$, which leads to a trivial solution \cite{mallat2008wavelet,percival2000wavelet}. Figure \ref{fig:ComplexityVsError} illustrates the aforementioned trade-off for an autoregressive process.

\vspace{-.15in}
\begin{figure}[htb]
\centering
  \centerline{\includegraphics[width=8.5cm]{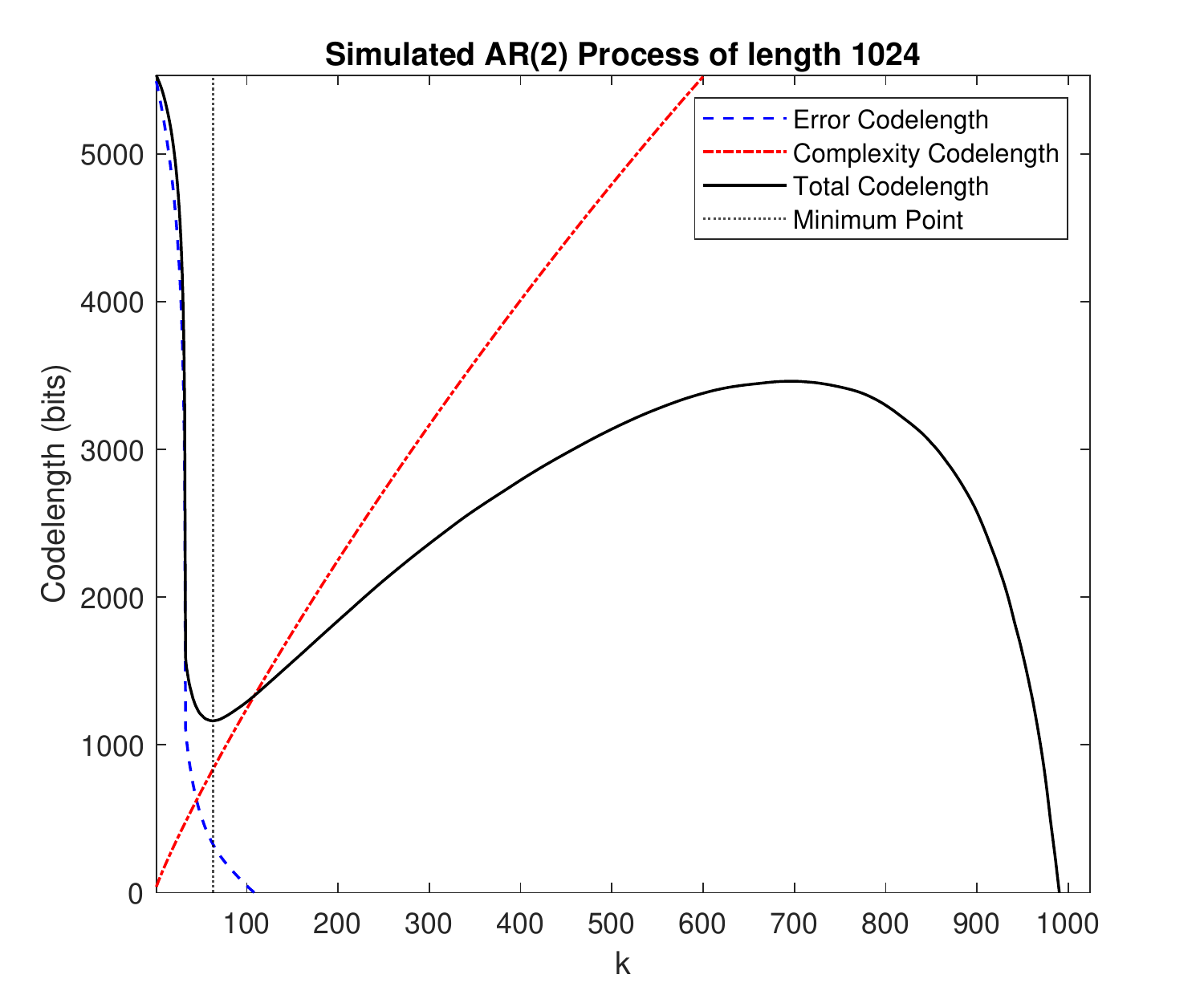}}
  \vspace{-.25in}
  \caption{\label{fig:ComplexityVsError} Error codelength, complexity codelength, the total codelength and the optimum $k$ for an AR(2) process.}\medskip
\end{figure}

\vspace{-.35in}
\section{Experiment}
\label{sec:Experiment}
\vspace{-.15in}
In order to show the discriminative power of the features extracted from the proposed sparse representation given in Theorem \ref{thm:OptimalK}, we use an experimental dataset in which a perturbation occur after baseline. The dataset consists of signals (temperature, heart rate, accelerometer and electrodermal activity) recorded by a wearable device (Empatica E4) collected as part of a human viral challenge study. In this study, the data from participants were collected for three consecutive days before and five consecutive days after their exposure to Human Rhinovirus (HRV) pathogen in an isolated quarantine ward. During these eight days, the wearable time series were continuously recorded while the biospecimen collection (gene expression, metabolomics, viral load) took place on a daily basis. Based on the measured viral load, subjects are divided into shedders (infected) and non-shedders. While the infection status is clearly detectable from biospecimen collections, we would like to analyze the time series from the non-invasive wearable device to potentially detect infection. In this paper we mainly focus on the temperature time series due to likely fever caused by infectious diseases.

In the preprocessing phase, we first downsampled the time series to one sample per 10 seconds and then removed any outliers, e.g. due to loss of contact, which is easy to detect due to the sudden drop in the recorded temperature. In order to choose a wavelet basis, we used all the orthonormal wavelet libraries in the Wavelab's MATLAB package \cite{buckheit1995wavelab,buckheit1995wavelab_new} with various number of parameters to calculate the total codelength of the time series using Theorem \ref{thm:OptimalK}, which were used to choose the wavelet basis that resulted in the shortest codelength. While the Daubechies wavelets achieved the shortest codelength, Daubechies 8 was better than the rest. After choosing the basis, we used a sliding window of length 256 and calculated the optimal value of $k$ (Theorem \ref{thm:OptimalK}) for each segment. Figure \ref{fig:ShedderNonShedder} depicts the mean and the standard deviation of the optimal value of $k$ (represented as percentage, i.e. $\frac{k}{256}$) for all days using the error bar representation. As seen, there is a significant difference before and after inoculation (exposure to virus pathogen) for the shedders while such changes do not occur for the non-shedders. In order to show the importance of the lossless constraint and the strength of MDL approach, we compare the discriminating property of the optimal $k$ of Theorem \ref{thm:OptimalK} with the $k$'s achieved using BIC and AIC. This result is summarized in Table \ref{tab:Comparison} showing that the changes in the time series before and after inoculation are not reflected in the optimal $k$'s of BIC and AIC, while the optimal $k$ of Theorem \ref{thm:OptimalK} is sensitive to these changes. Other types of discriminative features that can be extracted are statistics of the selected wavelet coefficients. How these features could be used for prediction, is an ongoing work and is not described here. Instead, in the next section we show how the proposed sparse representation and the resulting discriminating features can be used in atypicality for anomaly detection.

\begin{figure}[htb]
\centering
  \centerline{\includegraphics[width=8.5cm]{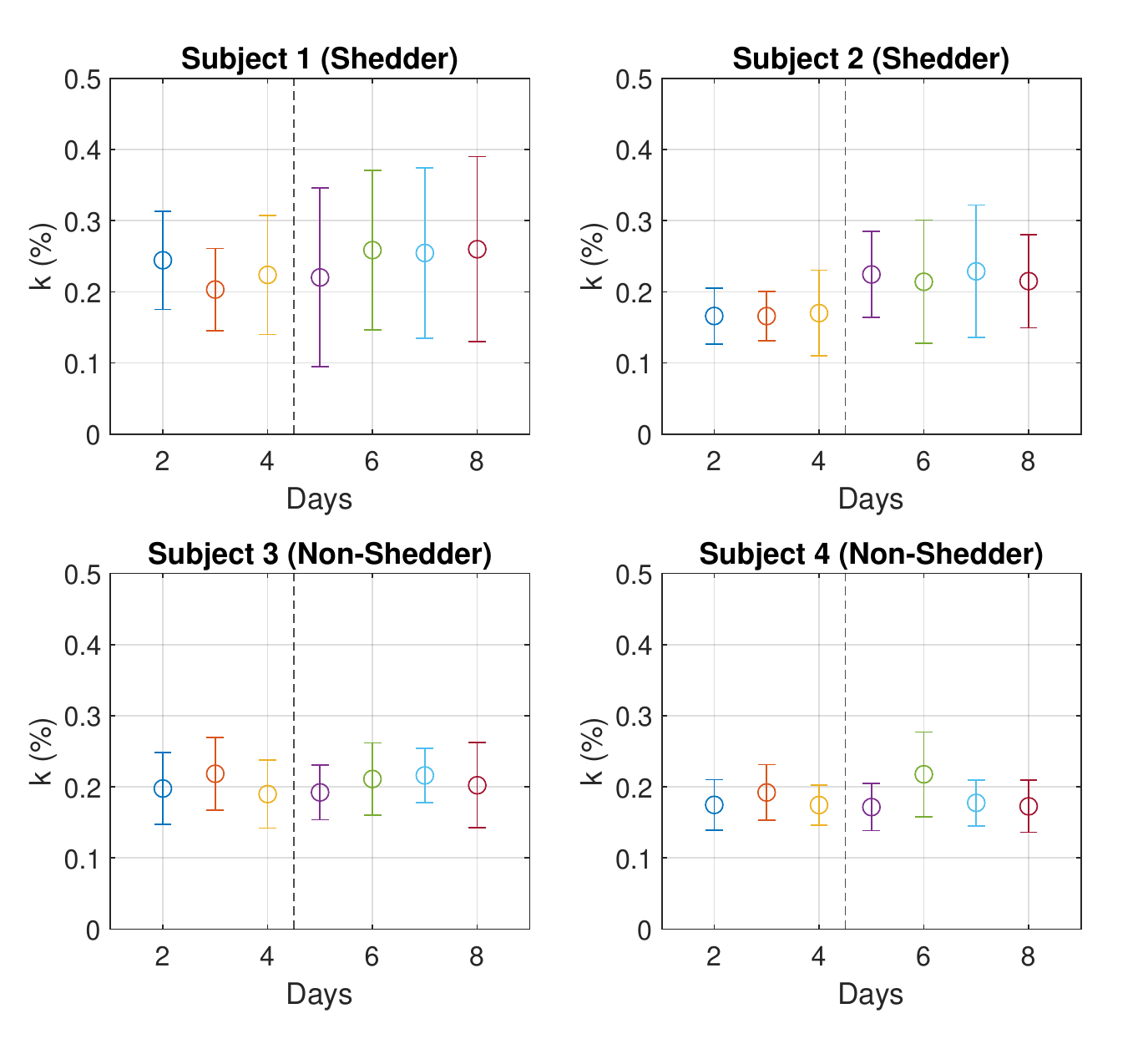}}
  \vspace{-.25in}
  \caption{\label{fig:ShedderNonShedder} The error bar representation of the daily optimal value of $k$ (represented as percentage, i.e. $\frac{k}{256}$) for two shedders and two non-shedders. The vertical dashed line shows the inoculation time. The first day was excluded since the data was recorder for a shorter period of time.}\medskip
\end{figure}

\vspace{-.2in}
\subsection{Anomaly Detection Using the Optimal Sparse Representation with Atypicality}
\label{sec:atypicality}
\vspace{-.1in}
Atypicality is a data discovery and anomaly detection framework that is based on a central definition: ``a sequence is atypical if it can be described (coded) with fewer bits in itself rather than using the (optimum) code for typical sequences'' \cite{sabeti2019data,host2019data}. In the atypicality framework, the comparison of the descriptive codelength between a training-based typical encoder and a universal encoder (independent of the train data and any prior information) is the criterion for detecting anomalous segments of data, i.e. the \textit{atypical} subsequences. Since the atypicality framework adheres to strict decodability, it is a well-suited framework for the optimal sparse representation proposed in this paper in order to detect anomalies. 

In our experiment, the temperature times series of the first two days was used as the training data, and the rest was used as the test data. For the typical encoder (training-based encoder) since both encoder and decoder have access to the training data and the wavelet basis, a sliding window of length $l$ is used in order to create a \textit{dictionary} $\mathcal{D}$ of the optimal $k_t$'s and the corresponding indexes of the $k_t$ non-zero elements of $\boldsymbol{\alpha}^{(k)}$ for all the segments in the training data. Therefore similar to the training data, both encoder and decoder also have access to the dictionary $\mathcal{D}$ as well. Then for any length-$l$ segment $\mathbf{x}$ of the test data, the typical encoder finds the best set of parameters ($k_t$ and the indexes of the $k_t$ non-zero elements) from the dictionary and sends its dictionary index to the decoder using $\log |\mathcal{D}|$ bits along with the error and the parameters values. As such, the typical codelength is
\begin{align*}
L_{t} & =\min_{d\in \mathcal{D}}\left\{ \frac{5}{2}k_d+\frac{k_d}{2}\log l+\frac{l}{2}\log\left\Vert \nabla^{(l-k_d)}\mathbf{W}\mathbf{x}\right\Vert ^{2}\right\} \\
 & + \log |\mathcal{D}|+\frac{l}{2}\log\frac{2\pi}{l}+\frac{l}{2\ln2}.
\end{align*}
For the atypical encoder, we use the result of Theorem \ref{thm:OptimalK}, ergo the atypical codelength is
{\small{}
\begin{align*}
L_{a} & =\min_{1\leq k<l/2}\left\{ \frac{5}{2}k+\frac{k}{2}\log l+lH(\frac{k}{l})+\frac{l}{2}\log\left\Vert \nabla^{(l-k)}\mathbf{W}\mathbf{x}\right\Vert ^{2}\right\} \\
 & +\frac{5}{2}+3\log l+\frac{l}{2}\log\frac{2\pi}{l}+\frac{l}{2\ln2}+\log\frac{\pi}{8}+\tau,
\end{align*}
}
in which $\log l+\tau$ is also added as the penalty for not knowing the start and end points of the anomalous sequence in advance \cite{sabeti2019data,host2019data}, and then $\tau$ can be used as a detection hyperparameter for which a value can be derived by cross-validation. As such, let $L'_{a}=L_{a}-\tau$. Therefore, the detection criterion is $L'_a-L_t>\tau$. Figure \ref{fig:atypicality} depicts the temperature time series for a shedder and the detected anomalous segments, all of which occur after inoculation. Applying the same approach to non-shedders we did not find any anomalies.

\begin{table}[tbh]
\begin{centering}
\begin{tabular}{ | c | c | c | c | }
\hline
           & AIC & BIC & Theorem \ref{thm:OptimalK} \\ \hline
Subject 1 (Shedder)  & 1.05   & 1.15   & 1.73   \\ \hline
Subject 2 (Shedder)  & 0.99   & 0.79   & 1.70   \\ \hline
Subject 3 (Non-Shedder) & 0.97   & 0.76   & 0.93   \\ \hline
Subject 4 (Non-Shedder) & 1.02   & 1.07   & 1.17   \\ \hline
\end{tabular}
\par\end{centering}
\vspace{-.1in}
\caption{\label{tab:Comparison} Comparison between the ratio of after to before inoculation of the averaged standard deviation of the optimal value of $k$ achieved from AIC, BIC and Theorem \ref{thm:OptimalK} for the subjects in Figure \ref{fig:ShedderNonShedder}. Note that only for the optimal $k$ of Theorem \ref{thm:OptimalK}, this ratio is considerably greater than one for shedders and approximately one for non-shedders, which can be used for detection and classification purposes.}
\end{table}

\vspace{-.3in}
\begin{figure}[htb]
\centering
  \centerline{\includegraphics[width=8.5cm]{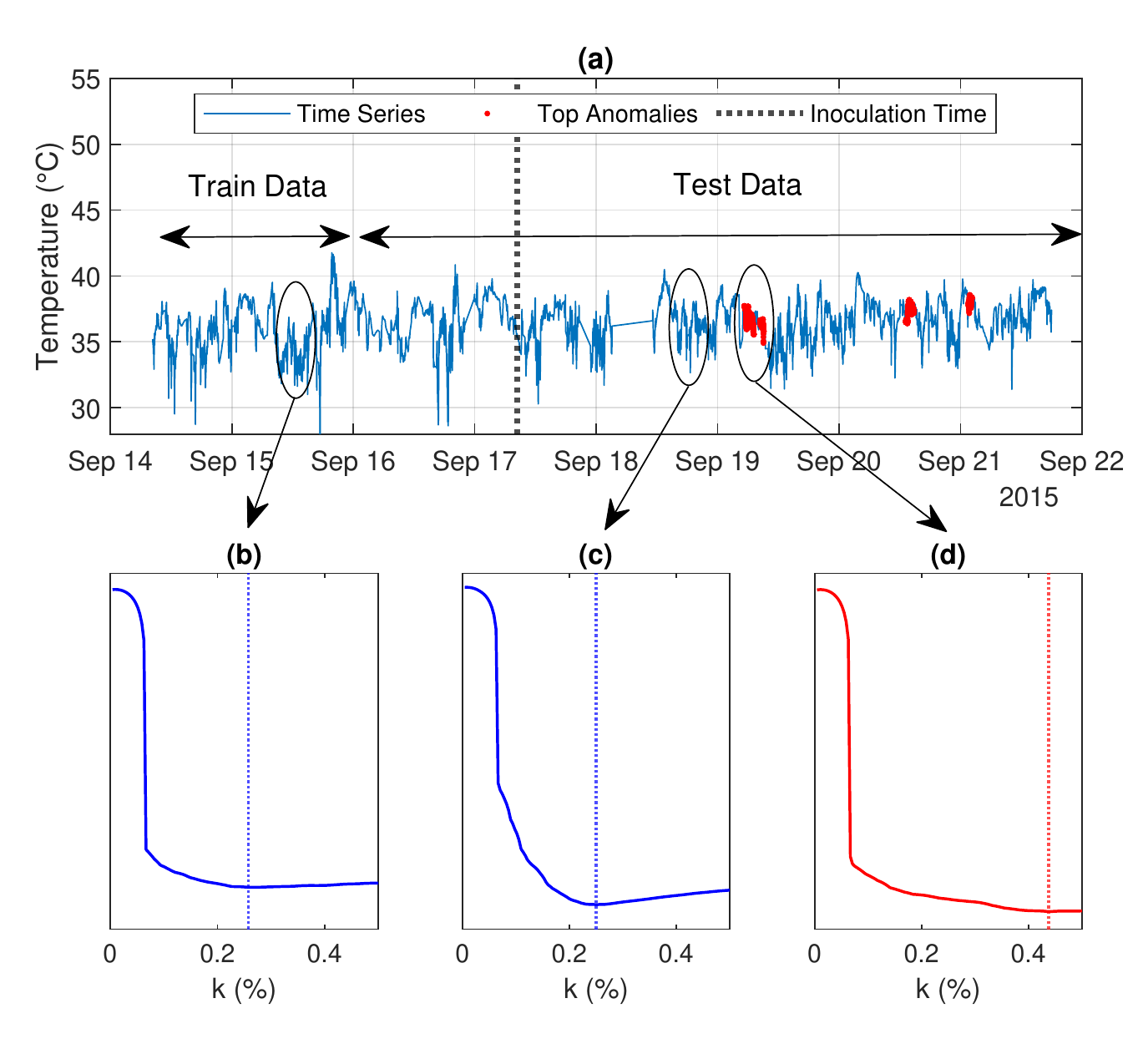}}
  \vspace{-.25in}
  \caption{\label{fig:atypicality} The result of anomaly detection for a subject that sheds virus: (a) the temperature times series and the detected anomalous segments (training and test data are shown by arrows, and inoculation time is illustrated with a vertical dashed line); the atypical codelength and the optimum $k$ for: (b) a random segment of train data; (c) a random non-anomalous segment of test data; (d) an anomalous segment of test data.}\medskip
\end{figure}

\vspace{-.3in}
\section{Conclusion}
\label{sec:conclusion}
\vspace{-.1in}
In this paper, we proposed a sparse representation under lossless compression constraint using orthonormal bases. We then used the MDL principle to achieve a unique and optimal sparse representation for one-dimensional time series data with application in data pattern discovery. In future works, we will use a comprehensive set of discriminative features extracted from the proposed data-driven sparse representation to enhance machine learning in discriminative analysis.

\bibliographystyle{IEEEbib}
\bibliography{Prometheus,ECGandHRV,BigData,Coop03}

\end{document}